\documentclass[a4paper,english,12pt,reqno,oneside]{amsart}

\usepackage{amsmath, amsthm, amssymb}
\usepackage{a4, babel, graphicx, listings, fancyhdr, verbatim}
\usepackage{mathrsfs} 
\usepackage{anysize} 
\usepackage{natbib}
\usepackage[format = hang]{caption}                              
\usepackage{verbatim}

\usepackage{a4}
\usepackage[latin1]{inputenc}
\usepackage{amsmath, amsthm}
\usepackage{graphicx}
\usepackage{amssymb}
\usepackage{verbatim}
\usepackage{listings}
\usepackage{alltt}
\usepackage{babel}
\usepackage{mathrsfs}
\usepackage{ae}
\usepackage{natbib}

\newtheoremstyle{theorem}
{10pt} 
{10pt} 
{\sl} 
{\parindent} 
{\bf} 
{. } 
{ } 
{} 

\theoremstyle{theorem}

\newtheorem{proposition}{Proposition}

\newtheorem{remark}{Remark}

\def\beq{\begin{eqnarray}}
\def\eeq{\end{eqnarray}}

\def\beqn{\begin{eqnarray*}}  
\def\eeqn{\end{eqnarray*}}

\def\E{{\rm E}}
\def\Var{{\rm Var}}
\def\cov{{\rm cov}}
\def\dd{{\rm d}}
\def\N{{\rm N}}
\def\fic{{\rm FIC}}
\def\bsq{{\rm bsq}}

\def\Pr{P}
\def\pr{{\rm pr}}

\def\risk{{\rm risk}}
\def\afic{{\rm AFIC}}

\def\quadandquad{\quad {\rm and} \quad}
\def\arr{\rightarrow}
\def\hatt{\widehat}
\def\hati{\widehat}

\def\tildi{\widetilde}

\def\eps{\varepsilon}
\def\half{\hbox{$1\over2$}}

\def\rootn{\sqrt{n}}

\def\midd{\,|\,}
\def\tr{{\rm t}}

\def\argmin{{\rm argmin}}
\def\argmax{{\rm argmax}}

\def\true{{\rm true}}

\def\Tr{{\rm Tr}}

\def\mse{{\rm mse}}

\def\corr{{\rm corr}}

\usepackage[a4paper,textheight=24.0truecm,textwidth=16.0truecm,top=2.7truecm]{geometry}
\numberwithin{equation}{section} 
\numberwithin{figure}{section}
\numberwithin{table}{section}
\usepackage{hyperref}

\newcommand{\R} {\mathbb{R}}
\newcommand{\pa}{{\rm pm}}
\newcommand{\np}{{\rm np}}
\title[FIC for stationary time series]{Parametric or nonparametric: 
   the FIC approach for stationary time series}
\date{December 2015}

\begin{document}


\maketitle

\centerline{\bf Gudmund Hermansen, Nils Lid Hjort, Martin Jullum}  

\medskip 
\centerline{\bf Department of Mathematics, University of Oslo} 

\medskip 
\centerline{\bf December 2015} 

\begin{abstract}
We seek to narrow the gap between parametric and 
nonparametric modelling of stationary time series 
processes. The approach is inspired by recent 
advances in focused inference and model 
selection techniques. 
The paper generalises and extends recent work 
by developing a new version of the focused information 
criterion (FIC), directly comparing the performance of
parametric time series models with a nonparametric alternative.
For a pre-specified focused parameter,
for which scrutiny is considered valuable,
this is achieved by comparing the mean squared error of
the model-based estimators of this quantity. In particular, this yields
FIC formulae for covariances or correlations 
at specified lags, for the probability of reaching
a threshold, etc. Suitable weighted average versions,
the AFIC, also lead to model selection strategies for
finding the best model for the purpose of estimating
e.g.~a sequence of correlations.
\end{abstract}

\keywords{
\noindent {\it Key words:}
focused inference, 
model selection, 
time series modelling, 
risk estimation}


\section{Introduction and summary}
\label{section:intro}

The focused information criterion (FIC) was introduced 
in \cite{ClaeskensHjort03} and is 
based on estimating and comparing the accuracy 
of model-based estimators
for a chosen focus parameter. This focus, say $\mu$, 
ought to have a clear statistical interpretation across candidate models. 
For a given candidate model, $\mu$ is traditionally 
expressed as a function of this model's parameters.
In general, the focus parameter can be any sufficiently
smooth and regular function of the underlying model 
parameters, or more generally its spectral distribution. This includes quantiles, regression coefficients, 
a specified lagged correlation, but also various types 
of predictions and data dependent functions, to
name some; see \cite{HermansenHjort15c} for a more complete list
and discussion of valid focus parameters for time series models.

Suppose there are candidate models $M_1,\ldots,M_k$,
leading to focus parameter estimates 
$\hatt\mu_1,\ldots,\hatt\mu_k$, respectively. The 
underlying idea leading to the FIC is to estimate 
the mean squared error (mse) of $\hatt\mu_j$
for each candidate model and then select the model that achieves
the smallest value. The mse in question is
$$\mse_j=\E\,(\hatt\mu_j - \mu_{\true})^2
   = \textrm{bias}(\hatt\mu_j)^2 +  \Var\,\hatt\mu_j, $$
comprising the variance and the squared bias
in relation to the true parameter value $\mu_\true$.
Thus the FIC consists of finding ways of assessing,
approximating and then estimating the $\mse_j$ for
each candidate model. The winning model is the
one with smallest $\hatt \mse_j$. How this may be done depends
on both the candidate models and the focus parameter,
as well as on other characteristics of the underlying situation.
The FIC apparatus hence leads to different types of 
formulae in different setups;
see \citet[Ch.~5 \& 6]{ClaeskensHjort08} 
for a fuller discussion and illustrations of
such criteria for selection among parametric models.


Most FIC constructions have been derived by relying on a
suitably defined local misspecification framework, see again 
\citet[Ch.~5 \& 6]{ClaeskensHjort08}. 
In such a framework the true model is assumed to gradually shrink
with the sample size, starting from the biggest `wide' model and hitting 
the simplest `narrow' model in the limit. In addition, and all candidate models need 
to lie between these two model extremes. 
In the various data settings, such frameworks typically result in squared 
biases and variances of the same asymptotic order, motivating certain 
approximation formulae for the $\hatt \mse_j$ in question. 
In \cite{HermansenHjort15c} such a framework is used to derive FIC machinery 
for choosing between parametric time series models within broad 
classes of time series models. 
 See Section \ref{sec:local} for some further remarks.

The aim of the present paper is to derive FIC machinery 
which will justify comparison and selection
among both parametric and nonparametric candidate models.
The derivation will be somewhat different from that of
\citet{ClaeskensHjort03} and \cite{HermansenHjort15c} in that we do not rely on a certain
local misspecification framework. We rather take a more direct approach
following reasoning similar to the development
of \cite{JullumHjort15}, where focused inference and model selection 
among parametric and nonparametric models are developed 
for independent observations.
By including a nonparametric candidate among the parametric 
models, we will in particular be able to detect whether 
our parametric models are off-target. This FIC construction, with a nonparametric alternative, 
therefore has a built-in insurance mechanism against 
poorly specified parametric candidates. When one or more
parametric models are adequate, such are selected 
as they typically have lower variance.

Though our methods will be extended to more general setups later, 
we start our developments with the class of zero-mean stationary 
Gaussian time series processes. Let $\{Y_t\}$ be such a process. Then 
the dependency structure, which in such cases determines 
the entire model, is completely specified by the corresponding 
covariance function $C(k) = \cov(Y_t, Y_{t+k})$, defined for all lags
$k = 0, 1, 2, \ldots$. Here we will, for mathematical convenience, 
work with the frequency representation, where the covariance function 
$C(k)$ can be represented by a unique spectral distribution $G$ such that 
\beq
\label{covariance.function}
	C(k)= \int_{-\pi}^{\pi} e^{ik\omega}\,\dd G(\omega)
    = 2 \int_{0}^{\pi} \cos(k\omega) g(\omega) \,\dd \omega,
\eeq 
provided the corresponding spectral distribution 
$G$ has a continuous and symmetric density $g$.
See among others \cite{Brillinger75}, \cite{Priestley81} or \cite{Dzhaparidze86}
for a general introduction to time series modelling in the frequency domain. 
When necessary, we will write $C_{g}$ to indicate that this is the covariance 
indexed by the spectral density $g$.
Note also that we can obtain the spectral density as the Fourier transform
of the covariance function.



The types of parametric models we will consider are typically the 
classical autoregressive (AR), moving average (MA) and 
the mixture (ARMA), all of which have clear and well defined 
corresponding spectral densities; see e.g.~\cite{Brockwell91}
for an introduction to time series modelling with such models. Note 
that the theory developed here is general, and that there is nothing 
other than convenience that restricts us to these particular classes of parametric
models. For an observed series $y_{1}, \ldots, y_{n}$, 
the raw periodogram
\beq
\label{eq:periodogram}
	I_{n}(\omega) = \frac{1}{2 \pi n } \bigg | \sum_{t = 1}^{n}
	 y_{t} \exp (i \omega t ) \bigg |^{2}, \quad \textrm{ for } -\pi \le \omega < \pi,
\eeq
will be our favourite nonparametric model for the underlying spectral 
density. The main reason for not considering variations of smoothed or 
tapered periodogram estimators is that we are interested in focus parameters 
that involves functions of the integrated spectrum, which essentially 
is a type of smoothing, rendering the pre-smoothing of the raw periodogram  
less critical and often unnecessary.

We will start out considering a class of focus functions 
of the type
\beq
\label{eq:focus}
\mu(G; h_0)=\int_{-\pi}^{\pi} h_0(\omega)\,\dd G(\omega),
\eeq
where $h_0$ is a piecewise continuous and bounded function 
on $[-\pi,\pi]$, with potentially 
a finite number of jump discontinuities. 
This class includes e.g.~the covariance 
function, which is easily seen from (\ref{covariance.function}) above, 
and allows studying specific parts 
of the spectral density by using indicator functions; 
see also \cite{Gray06} for further illustrations
involving quantities of type (\ref{eq:focus}).

\begin{figure}[ht]
\centering
\includegraphics[width=0.90\textwidth]{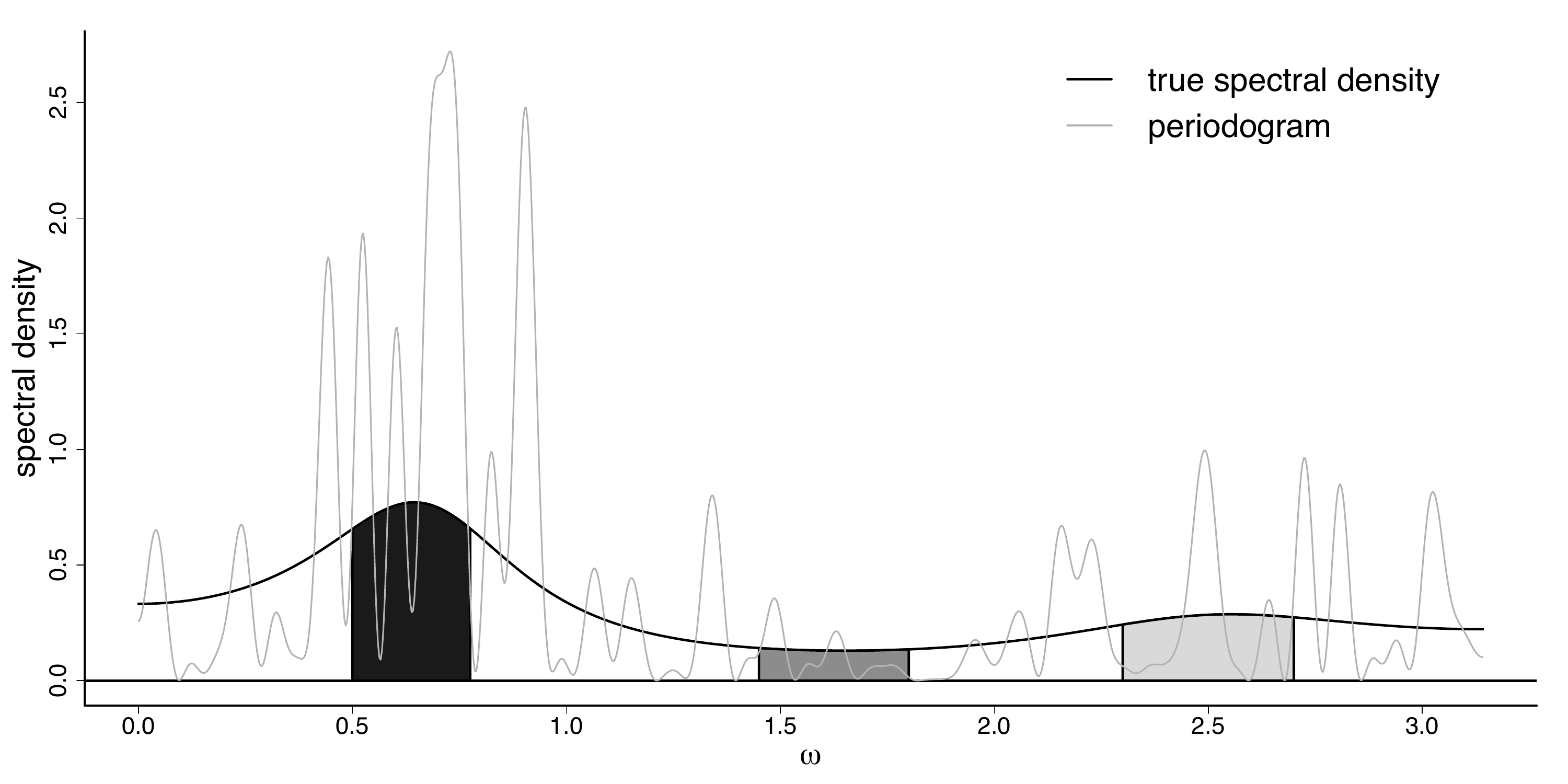}
\caption{
The true spectral density and the raw periodogram from a  simulated 
autoregressive time series of order $4$, with length $n = 100$ and 
 parameters $\rho = (0.2, 0.2, -0.1, -0.2)$ and 
$\sigma = 1.30$.
The shaded regions corresponds to three different 
focus parameters, namely, the integrated spectrum (or total energy) 
over that particular region. 
}
\label{figure:intro:data}
\end{figure}

Finding the best model to estimate the integrated spectrum 
(or total power/energy) over a specific region, may be an 
interesting and important applications in several areas of 
research; like pharmacology, astronomy, oceanography 
and in the interpretation of seismic data. The reason is that 
in all of these situations the observed time series 
is converted into the associated spectra, where the processed 
spectral density and especially the energy over certain regions 
of frequencies, have clear interpretations. For example, in pharmacology 
the spectrum of EEG/ERP signals may be used to quantify 
certain brain functions, indicating e.g.~the effect of a potential drug. 
In such applications, the different models 
may not always have clear interpretations
as time series, per se. The FIC is nevertheless
able to rank the fitted models in terms of 
estimated precision of estimates, for the
focus parameter in question. 
This general idea and particular usage of the FIC 
is illustrated in Figures \ref{figure:intro:data} and \ref{figure:intro:fic}
using simulated data from an autoregressive model of order 4, for focus parameters
$$
	\mu_{j} = \int_{0}^{\pi} I(a_{j} \le \omega < b_{j}) g(\omega) \, \dd \omega = G(b_j)-G(a_j), 
$$
for $j = 1, 2$ and 3, for the corresponding intervals $(a_j, b_j) \subset [0,\pi)$;
which are marked by the shaded regions in Figure \ref{figure:intro:data}.
The candidate models are the autoregressive models of order 0--4 and a nonparametric 
alternative based on integrating the raw periodogram (\ref{eq:periodogram}). The
AR-model of order 0 corresponds to the independence model. 
Here, the FIC works well: For each focus parameter it prefers models that 
all results in estimates that are reasonably close to the true value; which in terms 
if rmse (and absolute deviation from the truth) is not always the nonparametric or true model of order 4. 
Moreover, this example also illustrates a second and important concept, namely, 
that one and the 
same model is not necessarily best for all focus parameters. Note that the FIC prefers 
an AR(3), AR(4) and AR(1) for the respective regions 1, 2, 3.

\begin{figure}
\centering
\includegraphics[width=0.90\textwidth]{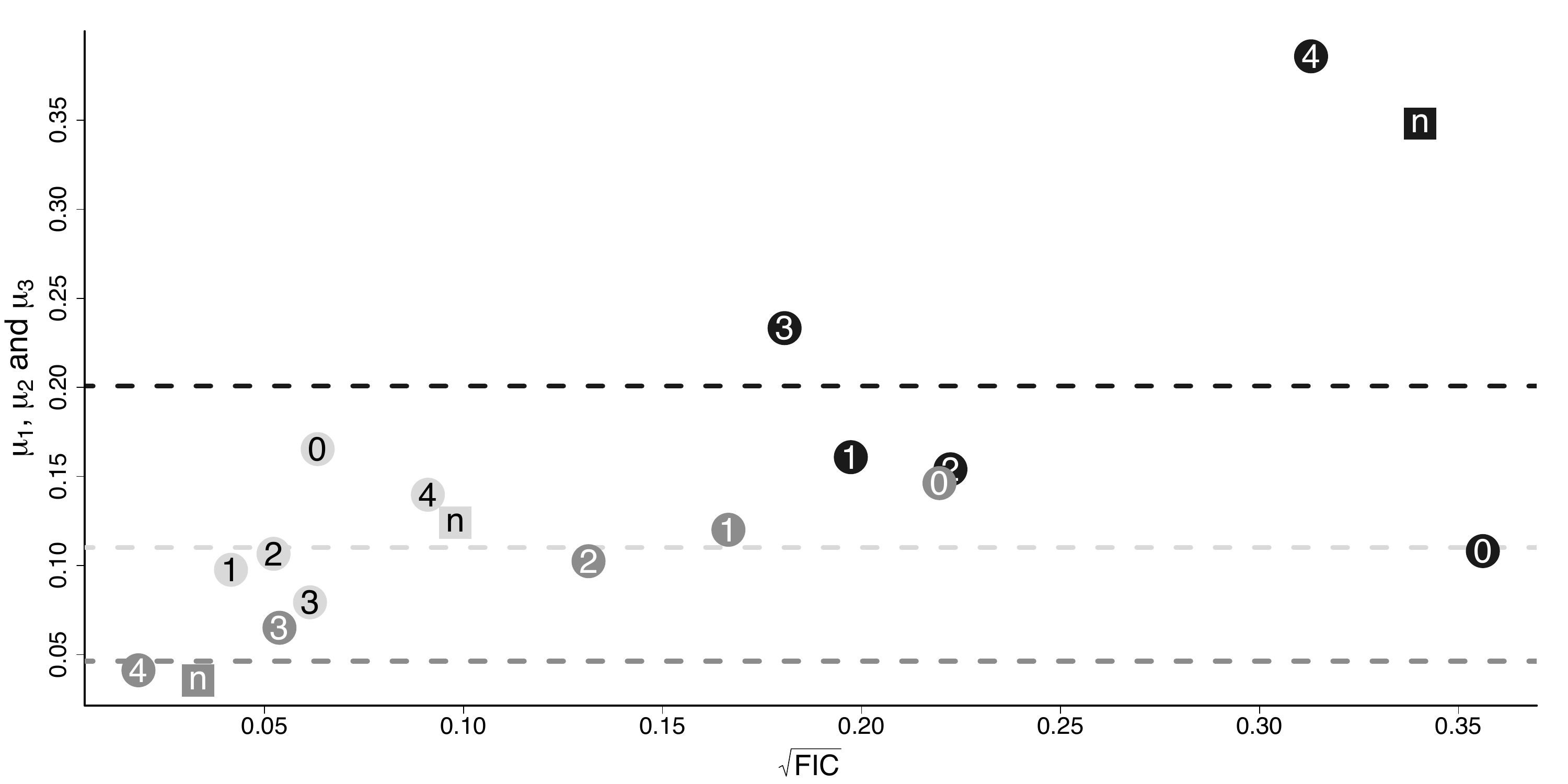}
\caption{
The horizontal lines indicate the true spectral density over the 
three shaded regions (of the same colour) shown in Figure 
\ref{figure:intro:data}; the three focus parameters $\mu_1, \mu_2$ and $\mu_3$.
The corresponding coloured dots 
show the performance, in terms of the root of the FIC score for 
the nonparamteric model based on the periodogram (n) and 
the autoregressive models of order 0--4, where 0 represent 
the model with independent. 
}
\label{figure:intro:fic}
\end{figure}

A class of focus parameters wider than that of (\ref{eq:focus}) takes focus parameters of the form
\beq
\label{eq:focus2}
\begin{aligned}
\mu(G;h,H)&=H(\mu(G;h_1),\ldots,\mu(G;h_k)) \\
   &=H\Bigl(\int_{-\pi}^\pi h_1(\omega)\,\dd G(\omega),\ldots,\int_{-\pi}^\pi h_k(\omega)\,\dd G(\omega)\Bigr), 
\end{aligned}
\eeq 
for a $k$-dimensional vector function $h(\omega)=(h_1(\omega),\ldots,h_k(\omega))^{\tr}$, where each of the $h_j$ is of the above type, 
and $H(x_1,\ldots,x_k)$ a continuously differentiable function of the $x_j=\mu(G;h_j), j=1,\ldots,k$. 
The direct correlations 
$$\corr(Y_t,Y_{t+k})={\cov(Y_t,Y_{t+k})\over \sigma^2}
   ={C(k)\over C(0)}
   ={\int_0^\pi \cos(k\omega)\,\dd G(\omega)
   \over \int_0^\pi\dd G(\omega)}, $$
for example, are of type (\ref{eq:focus2}). Another class 
of estimands captured by (\ref{eq:focus2}) are conditional 
threshold probabilities, say  
$\Pr\{Y_{n+1}\ge y\midd Y_{n}=y_n,\ldots, Y_{n-k}=y_{n-k}\}$, 
as these are functions of the $(k+1)\times(k+1)$ covariance matrix
for $(Y_{n-k},\ldots,Y_n,Y_{n+1})$. 
Later results will allow us to reach FIC formulae 
for this more general class.

In Section \ref{section:thefrequencyapproach} we provide 
a brief overview of some standard results needed 
to obtain good estimates for various mean squared error
quantities. Among other aspects we need properties 
of maximum likelihood- or Whittle approximated estimators outside the model, and some large-sample results regarding 
the periodogram. Then in Section \ref{section:parametricornonparametric}
we motivate and develop such mean squared error estimators,
leading to FIC formulae. 
In Section \ref{section:detrending} we show that under certain conditions, 
a detrended time series may be handled by our FIC scheme as if it was the original time series.
In Section \ref{section:afic} we extend the FIC methodology by
deriving an average weighted focused information criterion
which aims at selecting the best model for estimating a full set of
focus parameters, possibly weighted to reflect their relative importance for the analysis.
In Section \ref{section:performance} we discuss certain theoretical behavioural aspects
of the derived FIC scheme, and present the results from a simulation study.
Some concluding remarks, some of which pointing 
to future work, are finally provided in Section \ref{section:concludingremarks}.

\section{Estimation and approximations}
\label{section:thefrequencyapproach}

%

We start out investigating the behaviour of the two most common parametric 
estimation procedures, those based on the maximum likelihood method 
and the associated Whittle approximation to the log-likelihood. 
We also give some basics
for nonparametric modelling.


\subsection{Maximum likelihood estimation outside the model}

\def\underliney{\underline y}

Let $\underliney_{n}=(y_1, \ldots, y_n)^\tr$ 
be a collection of $n$ realisations from 
a zero mean stationary Gaussian time series process with spectral distribution 
function $G$ and corresponding spectral density $g$.
Furthermore, let the spectral distribution function $F_{\theta}$ and its corresponding spectral 
density $f_\theta=f(\cdot;\theta)$ index an arbitrary parametric candidate model, where 
$\theta$ belongs to some parameter space $\Theta$ 
of dimension say $p$. The corresponding full log-likelihood is
\beq
	\label{eq:full_loglik}
\ell_n(\theta) = - \frac{n}{2} \log(2 \pi) 
	-\frac{1}{2} \log |\Sigma_{n}(f_{\theta})| 
-\frac{1}{2} \underliney_{n}^{\tr}  \Sigma_{n}(f_{\theta})^{-1} \underliney_{n}, 
\eeq
where $\Sigma_{n}(f_{\theta})$ is the covariance matrix with elements 
\beq
	\notag
	C_{f_{\theta}}(|s - t|)  = 2 \int_{0}^{\pi} \cos(\omega |s - t|) f_{\theta}(\omega) \, \dd \omega 
\eeq
 for $s, t = 1, \ldots, n$.
Since the class of parametric candidate models is not assumed
to necessarily include the true $g$, the maximum likelihood
estimator does not converge to a `true' parameter value.
Instead it converges to the so-called least false parameter value, 
i.e.~$\tildi\theta_n = \argmax_{\theta} \{\ell_{n}(\theta)\} 
\arr_{p} \argmin_{\theta} \{d(g, f_{\theta})\} = \theta_{0}$,  
where 
\beq
\label{definition:discrepancy_measure_d}
\begin{array}{rcl}
d(g, f_{\theta})  
&=&\displaystyle 
   \frac{1}{4\pi} \int_{-\pi}^{\pi} 
\Bigl\{  \frac{ g(\omega) }{ f_{\theta}(\omega) }
   - 1 - \log \frac{ g(\omega) }{f_{\theta}(\omega)} \Bigr\} \, \dd  \omega \\
&=&\displaystyle 
   - \frac{1}{4\pi} \int_{-\pi}^{\pi} 
   \{\log g(\omega)  + 1\} \, \dd \omega- R(G, \theta),
\end{array}
\eeq
and where
\beq
	R(G, \theta) = -\frac{1}{4\pi} \int_{-\pi}^{\pi} 
	\Bigl\{ \log f_{\theta}(\omega) + \frac{ g(\omega) }{ f_{\theta}(\omega) }
	\Bigr\} \, \dd \omega \notag
\eeq
may be referred to as the model specific part, 
see e.g.~\cite{DahlhausWefelmeyer96} for details.
Furthermore, it can be shown that
\beq
\label{eq:lm_large_sample}
\rootn ( \tildi{\theta}_{n} - \theta_{0}) \arr_{d} J_{0}^{-1} U \sim \N_p(0,J_0^{-1}K_0 J_0^{-1}), 
   \quad \textrm{ where } U \sim \N_p(0, K_{0}), 
\eeq 
with $J_{0}$ and $K_{0}$ defined by
\beqn
J_{0} 
&=& J(g, f_{\theta_{0}}) \\
&=& \frac{1}{4 \pi} \int_{-\pi}^{\pi} 
  \Bigl[ \nabla{\Psi}_{\theta_{0}}(\omega) 
   \nabla \Psi_{\theta_{0}}(\omega)^{\tr} g(\omega)
   + \nabla^{2}\Psi_{\theta_{0}}(\omega)\{f_{\theta_{0}}(\omega) 
   - g(\omega)\} \Bigr] 
\frac{1}{ f_{\theta_{0}}(\omega)} \, \dd \omega 
\eeqn
and 
\beqn
K_{0} = K(g, f_{\theta_{0}}) 
= \frac{1}{4 \pi} \int_{-\pi}^{\pi} \nabla{\Psi}_{\theta_{0}}(\omega) 
  \nabla{\Psi}_{\theta_{0}}(\omega)^{\tr} 
  \Bigl\{ \frac{ g(\omega) }{ f_{\theta_{0}}(\omega) } \Bigr\}^{2} \, \dd\omega, 
\eeqn
where $\Psi_{\theta}(\omega) = \log f_{\theta}(\omega)$. and 
$\nabla \Psi_{\theta}(\omega)$ and $\nabla^{2} \Psi_{\theta}(\omega)$
are respectively the vector of partial derivatives and matrix of second order 
partial derivatives with respect to $\theta$, 
see \citet[Theorem 3.3]{DahlhausWefelmeyer96}.
Note that $J_0=K_0$ under model conditions.


\subsection{The Whittle approximation}
The Whittle pseudo-log-likelihood is an approximation 
to the full Gaussian log-likelihood $\ell_{n}$ of (\ref{eq:full_loglik}). 
It was originally suggested by P.~Whittle in the 1950s (cf.~\cite{Whittle53}), 
and is defined as
\beq
\label{definition:whittle_approximation}
\hati{\ell}_{n}(\theta) 
   = - \half n\Big[ \log(2 \pi) 
 + \frac{1}{2 \pi} \int_{-\pi}^{\pi} \log\{2\pi f_\theta(\omega) \} \, \dd \omega
 + \frac{1}{2 \pi} \int_{-\pi}^{\pi} \frac{I_{n}(\omega)}{f_\theta(\omega) } 
	\, \dd \omega \Bigr], 
\eeq
where 
$I_{n}(\omega) = (2\pi n)^{-1} |\sum_{t \le n} y_{t}\exp (i\omega t)|^{2}$
is the periodogram. This approximation 
is close to the full Gaussian log-likelihood in the sense that 
$\ell_{n}(\theta) = \hati{\ell}_{n}(\theta)  + O_p(1)$
uniformly in $f$, see \cite{Coursol82}. 
More important here, however, is that (\ref{definition:whittle_approximation}) 
motivates an alternative estimation procedure, 
namely the Whittle estimator
$\hati{\theta}_{n} = \argmax_{\theta} \{\hati{\ell}_{n}(f_{\theta})\}$.
This  estimator is easier to work with in practice 
(both analytically and numerically) and shares several properties 
with the maximum likelihood estimator. 
In particular $\rootn(\hati{\theta}_{n} - \theta_{0})$
achieves the same limit distribution as in (\ref{eq:lm_large_sample}),
with the same least false parameter value 
$\theta_{0}$ as defined in relation to 
(\ref{definition:discrepancy_measure_d});
see \cite{DahlhausWefelmeyer96} for details. This means that 
in a large-sample perspective, the maximum likelihood 
estimator and the simpler Whittle estimator 
are equally efficient and essentially interchangeable.



\subsection{Nonparametric modelling}

As mentioned in the introduction, we shall use the periodogram 
in (\ref{eq:periodogram}) for nonparametric modelling. Under appropriate 
short memory conditions, it follows from \citet[Theorem 5.5.2]{Brillinger75}  
that $\E\{I_n(\omega)\}=g(\omega)+O(n^{-1})$, i.e.~that the periodogram is 
asymptotically unbiased as an estimator of the spectral density.
We shall thus use 
\begin{equation}
	\label{eq:nonparametric_estimation}
	\hati G_n(\omega) =  \int_{-\pi}^\omega  I_n(u)\,\dd  u, 
\end{equation}
as a canonical estimator for the spectral distribution $G$; for which 
\begin{equation}
	\notag
	\sqrt{n}(\hati{G}_n(\omega) - G(\omega) ) \rightarrow_{d} \N \bigg (0, 4 \pi \int_{-\pi}^{\omega} g(u)^{2} \, \dd u \bigg ), 
\end{equation}
see e.g.~\cite{Taniguchi80}.

\section{Parametric versus nonparametric}
\label{section:parametricornonparametric}

We shall now obtain large-sample approximations
for the focus parameter estimators. These shall
then be used to construct approximate mse formulae 
for each model's estimator of the focus parameter.
When estimated these mses then give the FIC formulae.

\subsection{How to compare parametric and nonparametric models?}

In completely general terms, let $\mu(G)$ be a focus function, 
i.e.~a functional mapping of the spectral distribution $G$ to a scalar value.
This may be estimated parametrically by estimators of the form
$\hatt{\mu}_\pa=\mu(F_{\hatt{\theta}_n})$, or nonparametrically by
$\hatt{\mu}_\np=\mu(\hati{G}_n)$. Other estimators of $\theta$ and $G$ may also be used, however.
Typically, the collection of parametric candidate models does not include the true $G$.
The question is then which model should we 
use  -- parametric or nonparametric -- 
for estimating the focus parameter.

Assume for the nonparametric and each of the parametric candidate models that 
\beqn
\rootn(\hatt{\mu}_{\np} - \mu_\true) \arr_{d} \N(0, v_{\np})
	\quad \textrm{ and } \quad
\rootn(\hatt{\mu}_{\pa} - \mu_{0}) \arr_{d} \N(0, v_{\pa}),
\eeqn
where $\mu_\true=\mu(G)$ is the true value of the focus parameter and
$\mu_{0} = \mu(F_{\theta_{0}})$ is the focus function evaluated 
under the least false parametric model $F_{\theta_{0}}$ as discussed 
in relation to (\ref{definition:discrepancy_measure_d}).
Then, without going into details, the large-sample results above 
motivate the following first-order approximations for the mse 
of the estimated focus parameters:
\beq
\label{eq:FIC_np_pm_mse}
\mse_{\np} = 0^2 + v_{\np}/n  = v_{\np}/n
\quadandquad 
\mse_{\pa} = b^2 + v_{\pa}/n,
\eeq 
where $b = \mu_{0} - \mu_{\true}$. 
The remainder of this section will be used to motivate and obtain 
good estimators for the mean squared errors in (\ref{eq:FIC_np_pm_mse})
with the class of focus paramters of the form $\mu(G; h_0)$ defined in (\ref{eq:focus}),
and the more general $\mu(G;h,H)$ in (\ref{eq:focus2}).

\subsection{Deriving unbiased risk estimates}

In the derivation below, the parametric candidates 
$F_{\theta}$ will be fitted using the Whittle estimator $\hati{\theta}_{n}$ 
as defined in (\ref{definition:whittle_approximation}),
while we will use the canonical periodogoram based estimator in
(\ref{eq:nonparametric_estimation}) for nonparametric estimation of 
the spectral distribution $G$.

Using the Whittle estimator in collaboration with 
(\ref{eq:nonparametric_estimation}) results in a convenient 
simplification of the derivations below; extending 
the arguments to full ML estimation is relatively 
straightforward, using techniques in \cite{DahlhausWefelmeyer96}.
This motivates the following nonparametric and parametric estimators 
for focus parameters $\mu(G; h_0)$ on the form of (\ref{eq:focus}):
\beqn
\hati\mu_{\np} 
= \int_{-\pi}^{\pi} h_0(\omega) I_{n}(\omega) \, \dd \omega 
= \frac{1}{n} \underliney_{n}^{\tr} \Sigma_{n}(h_0) \underliney_{n}
   \quadandquad 
\hati\mu_{\pa} = \int_{-\pi}^{\pi} h_0(\omega) 
   f_{\hati{\theta}_{n}}(\omega)\,\dd \omega,
\eeqn
where $\Sigma_{n}(h_0)$ is a $n\times n$-dimensional symmetric Toeplitz matrix, having elements of the general form
\beq
	\sigma_{n,s,t}(h_0) = \int_{-\pi}^{\pi} \cos(\omega |s - t|) h_0(\omega) \, \dd \omega. \notag
\eeq
for $s,t=1,\ldots,n$. The following proposition establishes the joint limit distribution 
for the estimators above (suitably normalised), which
in turn will be used to obtain good approximations for 
their respective mean squared errors. 




\begin{proposition} 
\label{proposition:propone}
Let $y_{1}, \ldots, y_{n}$ be realisations from 
a stationary Gaussian time series model with 
spectral density $g$ assumed to be uniformly bounded 
away from both zero and infinity. 
Suppose $|h_0|$ is bounded in $\omega$, that $f_{\theta}$ is two times differentiable with respect 
to $\theta$, and that $f_{\theta}$ and these derivatives, $\nabla f_{\theta}$ and $\nabla^{2} f_{\theta}$,  are continuous and uniformly bounded in 
both $\omega$ and $\theta$ in a neighbourhood 
of the least false parameter value $\theta_{0}$ as defined in 
(\ref{definition:discrepancy_measure_d}) above. Then 
\beq
\label{limit:mu_pa_and_mu_np}
\left(
\begin{array}{c}
\rootn(\hati{\mu}_{\np} - \mu_{\true}) \\
\rootn(\hati{\mu}_{\pa} - \mu_{0})
\end{array} \right)
\arr_d
\left( 
\begin{array}{c}
X_0 \\
c_0^{\tr} J(g, f_{\theta_{0}})^{-1} U
\end{array} \right)
\sim \N_{2} \left (
\left ( \begin{array}{c} 0 \\ 0 \end{array} \right ), 
\left ( \begin{array}{cc}
v_{\np}  & v_{c} \\
v_{c}      &  v_{\pa}
\end{array} \right)\right ),
\eeq
where
\beqn
v_{\np} = 4 \pi \int_{-\pi}^{\pi} \{  h_0(\omega) g(\omega) \}^{2} \, \dd \omega
\quadandquad 
v_{\pa} = c_0^{\tr} J(g, f_{\theta_{0}})^{-1} K(g, f_{\theta_{0}})  
J(g, f_{\theta_{0}})^{-1} c_0,
\eeqn
with $J$ and $K$ as defined below (\ref{eq:lm_large_sample}), 
and $v_{c} = c_0^{\tr} J(g, f_{\theta_{0}})^{-1} d_0$, 
where the $c_0$ is the partial derivative of $\mu(F_{\theta_0};h)$ with respect to $\theta$, i.e.~$c_0 = \nabla \mu(F_{\theta_{0}};h)= \int_{-\pi}^\pi h_0(\omega)\nabla f_{\theta_0}(\omega)\, \dd \omega$ 
 and 
$$d_0 = \cov(X, U) =  \int_{-\pi}^{\pi} 
\frac{\nabla f_{\theta_{0}}(\omega)  h_0(\omega) 
   g(\omega)^{2}}{f_{\theta_{0}}(\omega)^{2}} \, \dd \omega. $$
\end{proposition}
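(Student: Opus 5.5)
The plan is to reduce both components to linear functionals of the periodogram and then invoke one joint central limit theorem for such functionals. Write $A_n(\phi)=\int_{-\pi}^{\pi}\phi(\omega)\{I_n(\omega)-g(\omega)\}\,\dd\omega$ for bounded, piecewise continuous (possibly vector-valued) $\phi$. The nonparametric component is already of this form:
$\rootn(\hati\mu_\np-\mu_\true)=\rootn A_n(h_0)+\rootn\int h_0(\E I_n-g)\,\dd\omega$. The bias term is $O(n^{-1/2})$, since $\E I_n=g+O(n^{-1})$ under the short memory conditions (Brillinger, Theorem 5.5.2). Here $g$ bounded away from zero and infinity is understood to come with the summability of covariances needed for that result.

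For the parametric component I will use the standard M-estimation argument for the Whittle estimator $\hati\theta_n$. Consistency $\hati\theta_n\to_p\theta_0$ is taken from Dahlhaus and Wefelmeyer. The score of (\ref{definition:whittle_approximation}) is
$-\frac{n}{4\pi}\int \nabla\Psi_\theta\{1-I_n/f_\theta\}\,\dd\omega$. It vanishes at $\hati\theta_n$, and at $\theta_0$ its expectation under $g$ is zero because $\theta_0$ minimises $d(g,f_\theta)$. A Taylor expansion around $\theta_0$, with the uniform boundedness and continuity of $f_\theta,\nabla f_\theta,\nabla^2 f_\theta$ near $\theta_0$ and $f_\theta$ bounded below, controls the Hessian. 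This gives
$$\rootn(\hati\theta_n-\theta_0)=J_0^{-1}\rootn A_n(\phi_\theta)+o_p(1),\qquad \phi_\theta(\omega)=\frac{1}{4\pi}\frac{\nabla f_{\theta_0}(\omega)}{f_{\theta_0}(\omega)^2},$$
where the observed Hessian tends to $J_0$ because $A_n$ of a bounded continuous function is $o_p(1)$. The delta method, using that $\theta\mapsto\int h_0 f_\theta$ is differentiable with gradient $c_0$ under the boundedness assumptions, then yields
$\rootn(\hati\mu_\pa-\mu_0)=c_0^\tr J_0^{-1}\rootn A_n(\phi_\theta)+o_p(1)$.

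The key step is therefore a joint CLT for $\rootn\,(A_n(h_0),A_n(\phi_\theta)^\tr)^\tr$. The limiting covariance is given by the classical formula
$$\lim n\,\cov\{A_n(\phi_1),A_n(\phi_2)\}=4\pi\int_{-\pi}^{\pi}\phi_1\phi_2 g^2\,\dd\omega,$$
valid for Gaussian processes, where the fourth cumulant term vanishes, and for even $\phi_j$. I would get this from the representation $A_n(\phi)=n^{-1}\underliney_n^\tr\Sigma_n(\phi)\underliney_n-\E(\cdot)$ together with the Toeplitz trace approximation $n^{-1}\tr\{\Sigma_n(\phi_1)\Sigma_n(g)\Sigma_n(\phi_2)\Sigma_n(g)\}\to(2\pi)^3\int\phi_1\phi_2g^2$, suitably normalised (Grenander–Szeg\H{o}, Taniguchi). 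Asymptotic normality follows by the Cram\'er–Wold device, since a linear combination is again a Gaussian quadratic form: either show its higher cumulants vanish via trace bounds $\tr\{(\Sigma_n(\phi)\Sigma_n(g))^k\}=O(n)$, or diagonalise and apply Lindeberg, using $\|\Sigma_n(\phi)\|,\|\Sigma_n(g)\|$ bounded.

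Evaluating the covariance formula identifies the limit. With $\phi_1=\phi_2=h_0$ it gives $v_\np=4\pi\int h_0^2g^2$. With both equal to $\phi_\theta$ it gives $K_0$, so $v_\pa=c_0^\tr J_0^{-1}K_0J_0^{-1}c_0$. The cross term gives $4\pi\int h_0\phi_\theta g^2=\int h_0\nabla f_{\theta_0}g^2/f_{\theta_0}^2=d_0$, hence $v_c=c_0^\tr J_0^{-1}d_0$.

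I expect the main obstacle to be the trace approximation with $h_0$ only piecewise continuous with jumps. I would first handle continuous $h_0$ and then approximate it in $L^2$ by continuous functions, using that $\Var\{\rootn A_n(\phi)\}\le C\|\phi\|_2^2$ uniformly in $n$ because $g$ is bounded. This $L^2$ bound is what allows the limit to pass through the approximation.
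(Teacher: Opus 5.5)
Your proposal is correct and follows essentially the same route as the paper. Both arguments linearise the Whittle estimator through its score at $\theta_0$ (a Gaussian quadratic form, which is your $A_n(\phi_\theta)$) and write $\hati\mu_\np$ as a quadratic form. Both then obtain joint normality by Cram{\'e}r--Wold applied to the combined quadratic form, and identify $v_\np$, $K_0$ and $d_0$ through Toeplitz trace limits, which the paper simply cites from Dzhaparidze and from Dahlhaus and Wefelmeyer (Lemma A.5).

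Your version is more self-contained: it makes explicit the $O(n^{-1/2})$ periodogram bias, the convergence of the observed Hessian to $J_0$, and the handling of jumps in $h_0$ by $L^2$ approximation. It also states conditions the paper leaves implicit: $f_\theta$ bounded below, summable covariances, and even $h_0$ for the $4\pi\int\phi_1\phi_2 g^2$ covariance formula.
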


\begin{proof}

It follows from the results in \cite[Ch.~2]{Dzhaparidze86} that
$\hati\theta_n - \theta_0 = J(g, f_{\theta_{0}})^{-1} U_{n} + o_p(1/\rootn)$, 
where $U_n= \nabla \hati{\ell}_{n}(f_{\theta_{0}}) $ and 
\beqn
	U_{n}
	= - \half \{ \Tr ( \Sigma_{n}(\nabla \Psi_{\theta_{0}}) ) - 
   \underliney_{n}^{\tr} \Sigma_{n}( \nabla \Psi_{\theta_{0}}/ f_{\theta_{0}}) 
   \underliney_{n}  \}, 
\eeqn
where 
$\Psi_{\theta_{0}} = \log f_{\theta_{0}}$ and $\nabla \Psi_{\theta_{0}}$ is 
the vector of its partial derivatives. 
As a consequence, a Taylor expansion motivated by the standard delta method gives
$\hati{\mu}_\pa - \mu_0 = c_0^{\tr}J(g, f_{\theta_{0}})^{-1} U_{n} + o_p(1/\rootn)$.
Since $\sqrt{n}U_n \arr_d U$ by the assumptions of the proposition \citep{Dzhaparidze86}, the parametric part of the result holds.
In addition 
$$X_n= (\hati{\mu}_{\np} - \mu_{\true}) 
= \frac{1}{n}\underliney_{n}^{\tr} \Sigma_{n}(h_0) 
\underliney_{n} - \mu_\true,$$
which can be shown, by a modified version of the argument leading to 
the limit distribution of $U_n$,  
to have the property that $\sqrt{n}X_n \arr_d X_0 \sim \N(0,v_\np)$. 
This proves the nonparametric part of the result.
We finally need to show that these convergence results hold jointly.
Since the two drivers in the derivation of the limit distribution, 
$\underliney_{n}^{\tr} \Sigma_{n}(h_0) \underliney_{n} /n$ 
and $U_{n}$, are quadratic forms, the joint limit distribution 
is readily obtainable by a Cram{\'e}r--Wold type of argument.
To see how, let $a$ be a vector in $\R^{2}$ to be used in the Cram{\'e}r--Wold argument, and
define $$
	\Lambda_{n} 
	= a_{1} \sqrt{n} X_{n} + a_{2} \sqrt{n} U_{n}
	=
	\frac{1}{\sqrt{n}} \underliney_{n}^{\tr}\Sigma_{n}(a_{1} h_0 
	+  a_{2} \nabla \Psi_{\theta_{0}}/ f_{\theta_{0}}) \underliney_{n} 
	+ \gamma_n
$$
where $\gamma_n = \sqrt{n} \{ a_1\mu_\true - a_2 \Tr ( \Sigma_{n}(\nabla \Psi_{\theta_{0}}) )/2\}$. 
The $\gamma_n$ cancels out the mean, here, such that $\Lambda_n$ has mean zero. 
This is once again just a quadratic form, hence, $\Lambda_n$ is normal under the assumptions of the proposition; 
see \cite{Dzhaparidze86} or 
\cite{HermansenHjort14a} for derivations of a similar type.
The proof is completed by observing that by \citet[Lemma A.5]{DahlhausWefelmeyer96}, the covariances 
take the relevant form 
\beqn 
\cov(X_{n}, U_{n}) 
= \frac{2}{n} \Tr \{  \Sigma_{n}(h_0) \Sigma_{n}(g) 
\Sigma_{n}(\nabla \Psi_{\theta}/ f_{\theta}) \Sigma_{n}(g)  \}
\arr 
\int_{-\pi}^{\pi} 
\frac{\nabla f_{\theta_{0}}(\omega)h_0(\omega)g(\omega)^{2}}{f_{\theta_0}(\omega)^{2}} 
   \, \dd \omega.
\eeqn 
\end{proof}

We next extend the above proposition to the more general class of
We next extend the above proposition to the more general class of
focus parameters $\mu(G;h,H)$ in (\ref{eq:focus2}), being a continuously differentiable function of a finite number of the $\mu(G;h_0)$ functions. 
The nonparametric and parametric estimators for this class take the form 
$$\hati\mu_\np
   =H\bigl(n^{-1}\underliney_{n}^{\tr} \Sigma_{n}(h_1) \underliney_{n},
   \ldots,n^{-1}\underliney_n^\tr\Sigma_n(h_k)\underliney_n\bigr) $$
and 
$$\hati\mu_\pa
   =H\Bigl(\int_{-\pi}^\pi h_1(\omega)f(\omega;\hati\theta_n)\,\dd\omega,
   \ldots,\int_{-\pi}^\pi h_k(\omega)f(\omega;\hati\theta_n)\,\dd\omega\Bigr). $$ 

\begin{proposition}
\label{proposition:proptwo}
Under the conditions of Proposition \ref{proposition:propone} 
the focus parameters $\mu(G;h, H)$ in (\ref{eq:focus2}),
with estimators and estimands as above, 
fulfils 
\begin{align}
\label{limit:mu_pa_and_mu_np2}
\left(
\begin{array}{c}
\rootn(\hati{\mu}_{\np} - \mu_{\true}) \\
\rootn(\hati{\mu}_{\pa} - \mu_{0})
\end{array} \right)
\arr_d
\left( 
\begin{array}{c}
\nabla{H}_\np X \\
\nabla{H}_\pa c^{\tr} J(g, f_{\theta_{0}})^{-1} U
\end{array} \right)
\sim \N_{2} \left (
\left ( \begin{array}{c} 0 \\ 0 \end{array} \right ), 
\left ( \begin{array}{cc}
v_{\np}  & v_{c} \\
v_{c}      &  v_{\pa}
\end{array} \right)\right ),
\end{align}
where
\beqn
v_{\np} &=&  \nabla H_\np \{4 \pi \int_{-\pi}^{\pi} \{  h(\omega) g(\omega) \}^{2} \, \dd \omega\} \nabla H_\np^{\tr}
\quadandquad \\
v_{\pa} &=& \nabla H_\pa c^{\tr} J(g, f_{\theta_{0}})^{-1} K(g, f_{\theta_{0}})  
J(g, f_{\theta_{0}})^{-1} c \nabla H_\pa^{\tr},
\eeqn
and $v_{c} = \nabla H_\pa c^{\tr} J(g, f_{\theta_{0}})^{-1} d\, \nabla H^{\tr}_\np$,
where
$\nabla H_\np$ and $\nabla H_\pa$ are the gradients of $H$ evaluated at respectively $(\mu(G;h_1),\ldots,\mu(G;h_k))$ and
$(\mu(F_{\theta_0};h_1),\ldots,\mu(F_{\theta_0};h_k))$, 
$c$ is the $k\times p$-dimensional matrix with rows given by $\nabla \mu(F_{\theta_0};h_j), j=1,\ldots,k$
and 
$$
d=\cov(X,U)
=\int_{-\pi}^\pi \frac{\nabla f_{\theta_0}(\omega)h(\omega)g(\omega)^2}{f_{\theta_0}(\omega)^2}\, \dd \omega.
$$
\end{proposition}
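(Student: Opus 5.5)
The plan is to reduce Proposition \ref{proposition:proptwo} to a vector version of Proposition \ref{proposition:propone} and then apply the multivariate delta method. Write $x_\true=(\mu(G;h_1),\ldots,\mu(G;h_k))^\tr$ and $x_0=(\mu(F_{\theta_0};h_1),\ldots,\mu(F_{\theta_0};h_k))^\tr$. Let $\hati x_\np$ and $\hati x_\pa$ be the corresponding vectors of nonparametric quadratic forms $n^{-1}\underliney_n^\tr\Sigma_n(h_j)\underliney_n$ and parametric integrals $\int h_j f_{\hati\theta_n}\,\dd\omega$. The first step is to establish the joint limit
$$
\bigl(\rootn(\hati x_\np-x_\true)^\tr,\ \rootn(\hati x_\pa-x_0)^\tr\bigr)^\tr\arr_d \bigl(X^\tr,\ (c\,J(g,f_{\theta_0})^{-1}U)^\tr\bigr)^\tr,
$$
a $2k$-dimensional zero-mean Gaussian vector.

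For this step I would reuse the proof of Proposition \ref{proposition:propone} with a larger Cram\'er--Wold vector. The parametric coordinates are handled exactly as before. Since $\hati\theta_n-\theta_0=J(g,f_{\theta_0})^{-1}U_n+o_p(n^{-1/2})$ and each $\theta\mapsto\int h_jf_\theta\,\dd\omega$ is differentiable with gradient $\nabla\mu(F_{\theta_0};h_j)$, which follows from boundedness of $h_j$ and of $\nabla f_\theta$ together with dominated convergence, we get $\hati x_\pa-x_0=c\,J^{-1}U_n+o_p(n^{-1/2})$. Hence every coordinate is asymptotically a linear function of the quadratic forms $\underliney_n^\tr\Sigma_n(\cdot)\underliney_n$.

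Take $a\in\R^{k}$ and $b\in\R^p$. The combination $\sum_j a_j\rootn X_{n,j}+b^\tr\rootn U_n$ is again a single centred quadratic form $n^{-1/2}\underliney_n^\tr\Sigma_n(\sum_ja_jh_j+b^\tr\nabla\Psi_{\theta_0}/f_{\theta_0})\underliney_n$ minus its mean. Its symbol is bounded and piecewise continuous, so the same central limit theorem for Toeplitz quadratic forms \citep{Dzhaparidze86} gives asymptotic normality. The limiting covariances follow from the trace formula in \citet[Lemma A.5]{DahlhausWefelmeyer96}. Between $X_{n,i}$ and $X_{n,j}$ the limit is $4\pi\int h_ih_jg^2\,\dd\omega$, which gives the matrix written as $4\pi\int\{h(\omega)g(\omega)\}^2\dd\omega$, understood as $4\pi\int hh^\tr g^2$. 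Between $X_{n,j}$ and $U_n$ the limit is the $j$th row of $d$, exactly as in Proposition \ref{proposition:propone}. Between $U_n$ and itself the limit is $K$. One small point needs checking: $\mu_\true$ is replaced by the exact mean $n^{-1}\Tr(\Sigma_n(h_j)\Sigma_n(g))$. This is harmless because the difference is $o(n^{-1/2})$ for bounded $h_j$ and a short-memory $g$. If that rate fails for discontinuous $h_j$, I would fall back on the $O(n^{-1}\log n)$ Fej\'er-kernel bound.

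The second step is the delta method. Since $H$ is continuously differentiable, $\rootn(H(\hati x_\np)-H(x_\true))=\nabla H_\np\rootn(\hati x_\np-x_\true)+o_p(1)$, and similarly $\rootn(H(\hati x_\pa)-H(x_0))=\nabla H_\pa\rootn(\hati x_\pa-x_0)+o_p(1)$. Both expansions are linear maps of the same jointly converging vector, so the continuous mapping theorem gives the bivariate normal limit. The variances and covariance are then read off as
$$
v_\np=\nabla H_\np\Bigl\{4\pi\int hh^\tr g^2\,\dd\omega\Bigr\}\nabla H_\np^\tr,\qquad v_\pa=\nabla H_\pa\,cJ^{-1}KJ^{-1}c^\tr\,\nabla H_\pa^\tr,
$$
with cross term $\nabla H_\pa\,cJ^{-1}d^\tr\,\nabla H_\np^\tr$. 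This matches the statement, up to its convention of writing $c^\tr$ for the $k\times p$ matrix.

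The main obstacle is the joint Cram\'er--Wold step with several symbols, $h_1,\ldots,h_k$ together with the score symbol. The key points are to confirm that the combined symbol satisfies the hypotheses of the quadratic-form CLT, and that the cross-covariance traces converge for piecewise continuous $h_j$. Both follow the scalar case verbatim, so I would state them briefly rather than reprove them.
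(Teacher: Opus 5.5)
Your proposal is correct and follows the paper's own proof, which likewise extends the Cram\'er--Wold argument of Proposition~\ref{proposition:propone} jointly to $X_{n,1},\ldots,X_{n,k},U_n$ and then finishes with the multivariate delta method. Your extra care over the centring term and the orientation of the $k\times p$ matrix $c$ (and of $d$) fills in details the paper leaves implicit, without changing the argument.
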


\begin{proof}
By Propostion \ref{proposition:propone}, we see that 
(\ref{limit:mu_pa_and_mu_np}) holds for each $\mu(G;h_j)$. 
Let now $X_{n,j} =\frac{1}{n} \underliney_{n}^{\tr} \Sigma_{n}(h_j) \underliney_{n} - \mu_\true$ for $j=1,\ldots,k$.
By extending the Cramér--Wold argument in Propostion \ref{proposition:propone} to all of
$X_{n,1},\ldots, X_{n,k},U_n$, we see that there is joint convergence for all these. 
The standard (multivariate) delta method then completes the proof. 
\end{proof}

\begin{remark}
From the underlying structure of the proof of Propositions \ref{proposition:propone} and \ref{proposition:proptwo}, and the arguments (of e.g.~\cite{DahlhausWefelmeyer96} or \cite{Dzhaparidze86}) used to  show that the Whittle estimator has the same large-sample properties 
as the maximum likelihood estimator, it is clear that the conclusions of the two propositions
stays true if we replace Whittle with full maximum likelihood estimation. 
\end{remark}

The nonparametric estimator is by construction unbiased in the 
limit; an estimate for the risk is therefore 
easily obtained from the variance formula above. 
For the parametric candidate, we need in addition an unbiased 
estimate for the squared bias. Following \cite{JullumHjort15} 
we start with $\hati{b} =  \hati{\mu}_{\pa} - \hati{\mu}_{\np}$ 
as an initial estimate for $b = \mu_{0} - \mu_{\true}$. 
Since it follows from (\ref{limit:mu_pa_and_mu_np}) that 
$\rootn ( \hati{b} - b) \arr_{d} c^{t} J^{-1} U - X \sim \N(0, \kappa)$, 
where $\kappa = v_{\pa} + v_{\np} - 2v_{c}$, we have
$\E \, \hati{b}^{2} \approx b^{2} + \kappa/n + o(1/n)$.
This leads to mse estimators of the form
\begin{align}
\label{eq:FIC}
\begin{aligned}
\fic_{\np} &= \hati{\mse}_{\np}  = \hati{v}_{\np}/n, \\
\fic_{\pa} &= \hati{\mse}_{\pa}  = \hati{\bsq} + \hati{v}_{\pa}/n 
	= \max(0, \hati{b}^{2} - \hati{\kappa}/n)+ \hati{v}_{\pa}/n.
\end{aligned}
\end{align}
For the most general focus parameter formulation in (\ref{eq:focus2}), the variance and covariance estimators take the form 
\beqn
	\hati{v}_{\np} & = & \nabla \hati{H}_\np \{2 \pi \int_{-\pi}^{\pi}   
	h(\omega)^{2} I_{n}(\omega)^2  \, \dd \omega\} \nabla \hati{H}_\np^{\tr}, \textrm{ and }  \\
	\hati{v}_{\pa} & = & \nabla \hati{H}_\pa\hati{c}^{\; \tr} J(I_{n}, f_{\hati{\theta}_{n}})^{-1} 
	K(I_{n} / \sqrt{2}, f_{\hati{\theta}_{n}})  
	J(I_n, f_{\hati{\theta}_{n}})^{-1} \hati{c} \; \nabla \hati{H}_\pa  
	,
\eeqn
where $\hati{c} = (\nabla \mu(F_{\hati{\theta}_{n}};h_k),\ldots,\nabla \mu(F_{\hati{\theta}_{n}};h_k))^{\tr}$, 
$\nabla \hati{H}_\np$ and $\nabla \hati{H}_\pa$ are the gradients of $H$ evaluated at respectively $(\mu(\hati{G}_n;h_1),\ldots,\mu(\hati{G}_n;h_k))$ and
$(\mu(F_{\hati{\theta}_{n}};h_1),\ldots,\mu(F_{\hati{\theta}_{n}};h_k))$, 
and $J$ and $K$ are 
as defined in relation to $(\ref{eq:lm_large_sample})$ -- using  
$I_n(w)^2/2$ as the canonical nonparametric unbiased estimator 
for $g(w)^{2}$. These are all consistent according to \cite{Taniguchi80,Deo00}.


With FIC scores as above, representing clear-cut estimates of the risk of the nonparametric 
and parametric models' estimators of $\mu$, our model selection strategy turns 
out as follows: Compute the FIC score for each candidate model,
rank them accordingly, and select the model and estimator 
associated with the smallest FIC score. The same $\fic_{\pa}$ 
formula (with different estimates and quantities) 
is used for all of the possibly $m$ different parametric 
candidate models for simultaneous selection among the $m + 1$ 
models. This is perfectly fine as the $\fic_{\pa}$ 
formula does not depend on the other parametric models.

 Although we have concentrated on focus functions $\mu(G;h)$ and 
$\mu(G;h,H)$ given by (\ref{eq:focus}-\ref{eq:focus2}), our focused model selection 
strategy applies also to more general focus parameters, as long as 
joint limit distributions like (\ref{limit:mu_pa_and_mu_np}) and 
(\ref{limit:mu_pa_and_mu_np2}) may be proven. 
In completely general terms, our results may be generalised 
to focus parameters of the form $\mu=T(G)$ for well-behaved functionals 
$T$ mapping the spectral distribution $G$ to a scalar value.
The type of smoothness required for $T$ is in fact that the 
functional is so-called Hadamard differentiable at $G$ and $F_{\theta_0}$,
see e.g.~\citet[Theorem 20.8]{vanderVaart98} for further details. 
This allows us, for instance, to handle focus parameters involving
quantiles of the spectral distribution $G$.
It is also possible to extend the arguments 
to other parametric estimation procedures, especially if they 
are derived as minimisers of the empirical analogue of 
$\argmin\{R(G,\theta)\}$ for $R$ the model specific 
part of possibly different divergence measure than in 
(\ref{definition:discrepancy_measure_d}), 
see \cite{DahlhausWefelmeyer96} and \cite{Taniguchi80}
for alternatives.

%
%

\section{Models with trends}
\label{section:detrending}

So far we have only considered stationary time series with 
mean zero. In real applications, this is often an unrealistic 
assumption to make. Even if the series is stationary, the underlying 
mean is rarely exactly zero; the common solution 
in such cases is to detrend the series.
In time series modelling, detrending usually refers to the act 
of removing an estimated or deterministic trend from the observed 
series before the main analysis. This may be a complex function 
of time and covariates including seasonal effects, or be as simple as 
subtracting the arithmetic mean. A common approach is to work 
with the detrended series, which we will denote by $\hatt{y}_{t}$, and then
analyse this series using models for stationary time series,
without factoring in the extra estimation uncertainty 
involved in the detrending. This is often unproblematic, but 
even the innocent action of subtracting the mean may have 
unforeseen consequences (typically for the so-called second 
order properties). \cite{HermansenHjort14a} shows that such 
a simple operation alter the underlying motivation and
interpretation of the AIC for stationary Gaussian time series.
Thus, special care is required for such an operation.

Suppose the observed series is generated by the model 
\beq
\label{eq:time_series_with_trend}
Y_{t} = m(x_{t}, \beta) + \eps_{t}, 
\eeq 
where the $x_{t}$ are $p$-dimensional covariates, 
the $m$ is of known parametric structure, and 
$\{\eps_{t}\}$ is a zero mean stationary Gaussian time 
series process with spectral distribution 
function $G$ and corresponding density $g$. Assume 
further that we are able to estimate $\beta$ by a 
suitable $\hatt{\beta}_{n}$ with reasonable precision.
The question is then whether the results of 
Section \ref{section:parametricornonparametric}
are still valid also with detrended data, such that we may still use the same FIC formulae.

\begin{proposition}
\label{proposition:propthree}
Suppose the spectral densities $g$ and $f_{\theta}$ and 
function $h$ satisfy the conditions of 
Proposition \ref{proposition:propone}, and that the 
assumed trend $m$ and corresponding estimator 
$\hatt{\beta}$ for the unknown $\beta$
are such that $\rootn(\hatt\beta_n - \beta) = O_p(1)$. 
Assume further that in a neighbourhood of $\beta$ we have
\beqn
m(x, \hatt{\beta}_{n}) = m(x, \beta) 
   + \nabla m(x, \beta)^\tr (\hatt\beta_n - \beta) + r_{n}(x),
\eeqn
with $\max_{i} |r_{n}(x_{i})| = o_p(1/\sqrt{n})$ and $|\nabla m(x, \beta)|$
bounded in $x$.
Then the conclusions of Proposition \ref{proposition:propone} 
are still true if we replace $y_{t}$ with the detrended 
$\hatt{y}_{t} = y_{t} - m(x_{t}, \hatt{\beta}_{n})$.
\end{proposition}


\begin{proof}
We will show that the result follows as a corollary from 
certain general results regarding limit behaviour of 
quadratic forms from \citet[Section 3]{HermansenHjort14b}.
 
The argument is structured similarly to that of 
Proposition \ref{proposition:propone} and is built 
around a Cram{\'e}r--Wold type of argument. Observe 
that if we replace $y_t$ with the detrended 
$\hatt{y}_t = y_{t} - m(x_{t}, \hatt{\beta}_{n})$, 
we now have  
$\hatt{X}_{n}= ( \hatt{\underliney}_{n}^{\tr} \Sigma_{n}(h_{0}) 
   \hatt{\underliney}_{n} - \mu_{\true})$
and similarly 
\beqn
\hatt{U}_n
= - \half \{ \Tr ( \Sigma_{n}(\nabla \Psi_{\theta_{0}}) ) - 
  \hatt{\underliney}_{n}^{\tr} \Sigma_{n}
   ( \nabla \Psi_{\theta_{0}}/ f_{\theta_0}) 
   \hatt{\underliney}_{n}  \},
\eeqn
where $\hatt{\underliney}_{n}=(\hatt{y}_1, \ldots, \hatt{y}_n)^\tr$. Again, for any $a=(a_1,a_2)$ in $\R^2$, 
we now have 
$$
\hatt{\Lambda}_{n} = a_{1}\sqrt{n} \hatt{X}_{n} 
+ a_{2}\sqrt{n} \hatt{U}_{n} 
= \hatt{\underliney}_{n}^{\tr}\Sigma_{n}(a_{1} h_{0} 
+  a_{2} \nabla \Psi_{\theta_{0}}/ f_{\theta_{0}}) \hatt{\underliney}_{n}/\rootn + \gamma_n,
$$
with $\gamma_{n}$ as in the proof of Proposition \ref{proposition:propone}.
Then,  according to Proposition 3.1 of \citet{HermansenHjort14b}, 
$$ \hatt{\Lambda}_{n} - \Lambda_{n} = o_p(n^{-1/2}) $$
where $\Lambda_{n} = \underline{\eps}_{n}^{\tr}\Sigma_{n}(a_{1} h_{0} 
+  a_{2} \nabla \Psi_{\theta_{0}}/ f_{\theta_{0}}) \underline{\eps}_{n} /\sqrt{n} + \gamma_{n}$, 
where $\underline{\eps}_{n}=(\eps_1,\ldots,\eps_n)^{\tr}$ has elements corresponding to (\ref{eq:time_series_with_trend}).
Since the limit behaviour of $\Lambda_{n}$ is what 
defines the limit distribution in Proposition \ref{proposition:propone}, the 
argument is essentially complete.
\end{proof}

The above proposition may also be extended to the focus parameter in \eqref{eq:focus2}, as 
handled in Proposition \ref{proposition:proptwo}.
Traditionally, the least squares estimator has been the canonical method 
for estimating $\beta$ in models of the form of (\ref{eq:time_series_with_trend}).
 As an illustration, consider the linear regression model 
with dependent errors where $Y_{t} = x_{t}^{\tr} \beta + \eps_{t}$, for $p$-dimensional covariates $x_{t}$, 
and where $\{\eps_{t}\}$ is a zero mean stationary Gaussian time 
series process with spectral density $g$. On matrix form this yields
$\underline{y}_{n} = X \beta + \underline{\eps}_{n}$, where $X$ is 
the related $n \times p$-dimensional design matrix. The ordinary least squares 
estimate for $\beta$ is then given by 
$\hatt{\beta}_{n} =  (X^{\tr}  X)^{-1}X^{\tr} \underline{y}_{n}$. Then, in order for 
$\hatt{\beta}_{n}$ to satisfy the conditions of Proposition \ref{proposition:propthree},
it is sufficient that $n \Var ( \hatt{\beta}_{n} )
= n (X^{\tr} X)^{-1} X^{\tr} \Sigma_{n}(g) X (X^{\tr} X)^{-1} = o(1)$,
which is clearly satisfied if 
$X^{\tr} X/n \rightarrow_{p} Q_1$ and $X^{\tr} \Sigma(g) X/n \rightarrow_{p} Q_2$, 
as $n$ approaches infinity, where $Q_1$ and $Q_2$ are both 
finite positive definite matrices. These are the standard assumptions needed 
to ensure consistency of both standard and generalised least squares for models
with correlated errors.

\section{Average focused information criterion}
\label{section:afic}

We have so far concentrated on inference for a single 
focus parameter $\mu$. A natural generalisation of this
is to consider several focus parameters joinly, 
say correlations of orders 1 to 5. The FIC machinery can easily be lifted to such a situation,
involving a weighted average of FIC scores, the AFIC, with weights reflecting importance dictated  by the statistician. 

Suppose in general terms that estimands $\mu(u)$ 
are under consideration, for $u$ in some index set.
For each of these we have the nonparametric 
$\hati\mu_\np(u)$ and one or more parametric estimators
$\hati\mu_\pa(u)$. These typically have versions
of Propositions \ref{proposition:propone} or \ref{proposition:proptwo}, 
leading as per (\ref{eq:FIC_np_pm_mse}) to 
$$\mse_\np(u)=0^2+v_\np(u)
   \quadandquad
  \mse_\pa(u)=b(u)^2+v_\pa(u), 
  $$
with $b(u)=\mu_0(u)-\mu_\true(u)$. These mean squared 
errors can then be combined, via some suitable 
cumulative weight function $W(u)$, to 
$$\risk_\np= \int v_\np(u)\, \dd W(u)
   \quadandquad
  \risk_\pa=\int\{b(u)^2+v_\pa(u)\}\, \dd W(u) $$ 
Here $\dd W(\cdot)$ is meant to reflect the relative 
importance of the different $\mu(u)$, and should 
stem from the statistician's judgement and the 
actual context. Based on the data we may now form the following
natural estimates of these risk quantities:
\begin{align}
\label{eq:afic}
\begin{aligned}
\afic_\np&= \int\hatt v_\pa(u)\, \dd W(u), \\
\afic_\pa &= \int \bigl[\max\{\hatt b(u)^2-\hatt\kappa(u)/n\} 
   + \hatt v_\pa(u)\bigr]\, \dd W(u).
\end{aligned}
\end{align}
This operation also needs the covariances $v_c(u)$, 
as $\hatt\kappa(u)$ is to be constructed as the 
natural estimator of $\kappa(u)=v_\pa(u)+v_\pa(u)-2v_c(u)$. 

The AFIC scheme (\ref{eq:afic}) can be used in a variety
of circumstances.
A typical application may
involve assessing models for estimating a threshold
probability $\Pr\{Y_{n+1}\ge a\}$ over a set of many $a$,
again with a weight function $w(a)$ indicating 
relative importance. Another attractive application 
is for the task of estimating correlations $\corr(h)$
for lags $h=1,2,3,\ldots$, perhaps with a decreasing $w(h)$.
The AFIC method may similarly be applied for 
comparing the popular autorcorrelation function, such as 
{\tt acf} in the statistical software package {\tt R} \citep{R}, 
with potentially more accurate parametric alternatives.


%

\section{Performance}
\label{section:performance}

In the present section we will discuss some behavioural aspects of the derived FIC methodology.
First we present some theoretical consequences of using our new FIC construction for model selection.
Then we discuss some issues related to the 
more practical performance of this criterion,
and illustrate some of these in a simulation study. 
The goal is not to conduct a broad simulation 
based investigation, but rather show the potential of having a criterion 
for selecting among parametric models and a nonparametric alternative 
in a simple proof of concept type of illustration.

\subsection{FIC under model conditions}

Although we have been working outside specific parametric model 
conditions when deriving the FIC (and AFIC) above, it is natural to 
ask how the criteria selects when a parametric model is indeed correct. 
Consider however first the case where a specific parametric candidate 
model is incorrect and have bias $b\neq 0$. From the structure of the FIC formulae in (\ref{eq:FIC}) and the consistency of the involved variance and covariance estimators, we see that $\text{FIC}_\np =o_p(1)$, while $\text{FIC}_\pa = O_p(1)+o_p(1)=O_p(1)$. I.e.~the squared bias term dominates completely, and the probability that the FIC will select this particular parametric model will tend to  0 as $n \arr \infty$. If all the parametric candidate models are biased in this sense, then the FIC will eventually prefer the nonparametric model when the sample size increases.

Going more into detail, it is seen from the FIC formulae in (\ref{eq:FIC}) that the FIC prefers a specific parametric model over the nonparametric whenever
$$\max(\hati b^2-\hati\kappa/n,0) + n^{-1}\hati v_\pa 
   \leq  n^{-1}\hati v_\np. $$ 
Whenever $\hati v_\np \geq \hati v_\pa$, this is seen to be equivalent to 
$$Z_n \leq 2,$$ 
where $Z_n = (n\hati b^2)/(\hati v_\np - \hati v_c)$.

It turns out that under model conditions, we have $v_c=v_\pa$. This is rather straightforward to see by investigating the forms of $v_c$ and $v_\pa$ involved in Proposition \ref{proposition:proptwo}, in addition to the forms of $K_0$ and $J_0$. Inserting $g=f_{\theta_0}$ in these formulae reveals that $K_0=J_0$, $\nabla H_\np = \nabla H_\pa$ and $c=d$ and thereby $v_c=v_\pa$. 
Now, due to the consistency, we have $\hati{v}_\np - \hati{v}_c \arr_p v_\np - v_\pa$. Further, the limit distribution result of $\rootn(\hati b - b)$ given above \eqref{eq:FIC} ensures that
$Z_n \arr_d \chi_1^2$, with $\chi^2_1$ a chi-squared distributed variable with one degree of freedom.
That is, the limiting probability that the parametric model will be selected over the nonparametric when it is indeed true is
$\Pr\{Z_n \le 2\} \arr \Pr\{\chi_1^2 \le 2\} \approx 0.843$. 
Thus, if one of the parametric candidate models is correct, and the others have biases $b\neq 0$, then, for sufficiently large samples, the first parametric model and estimator will be selected with a probability tending to 84.3\%, while the nonparametric will be selected in the other 15.7\% proportion.

\subsection{FIC in practice}

\begin{figure}[ht]
\centering
\includegraphics[width=0.90\textwidth]{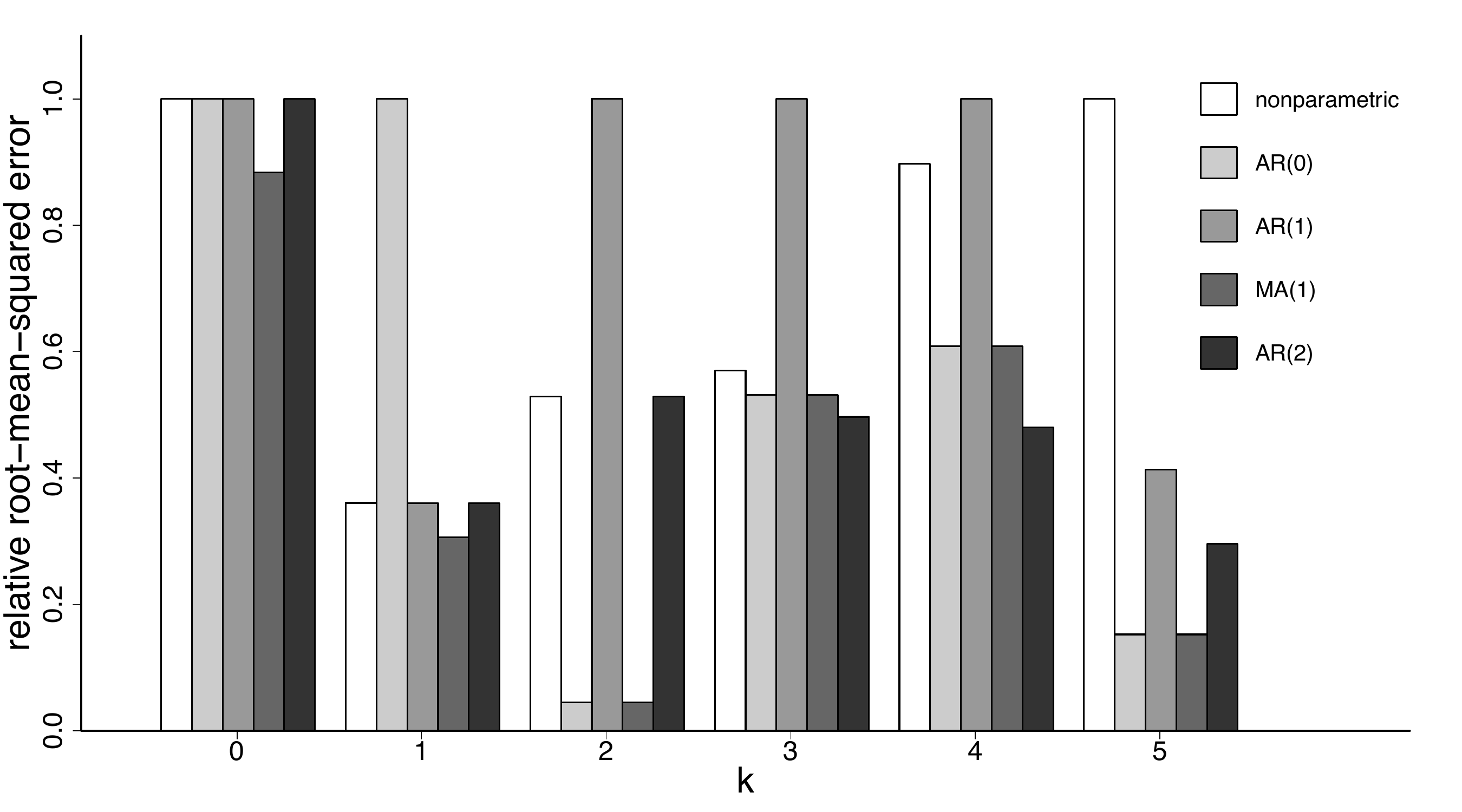}
\caption{
Relative root-mse for each candidate model fitted to the six
focus parameters $\mu_{k} = C(k)$, for $k = 0, \ldots, 5$. 
The root-mse is computed based on 5000 simulated AR(2) series 
of length $n = 100$, with $\sigma = 1.0$ and $\rho = (0.7, -0.6)$,
For ease of comparison we have scaled the root-mse to the unit interval.}
\label{figure:limit_mse}
\end{figure}

\begin{figure}[ht]
\centering
\includegraphics[width=0.90\textwidth]{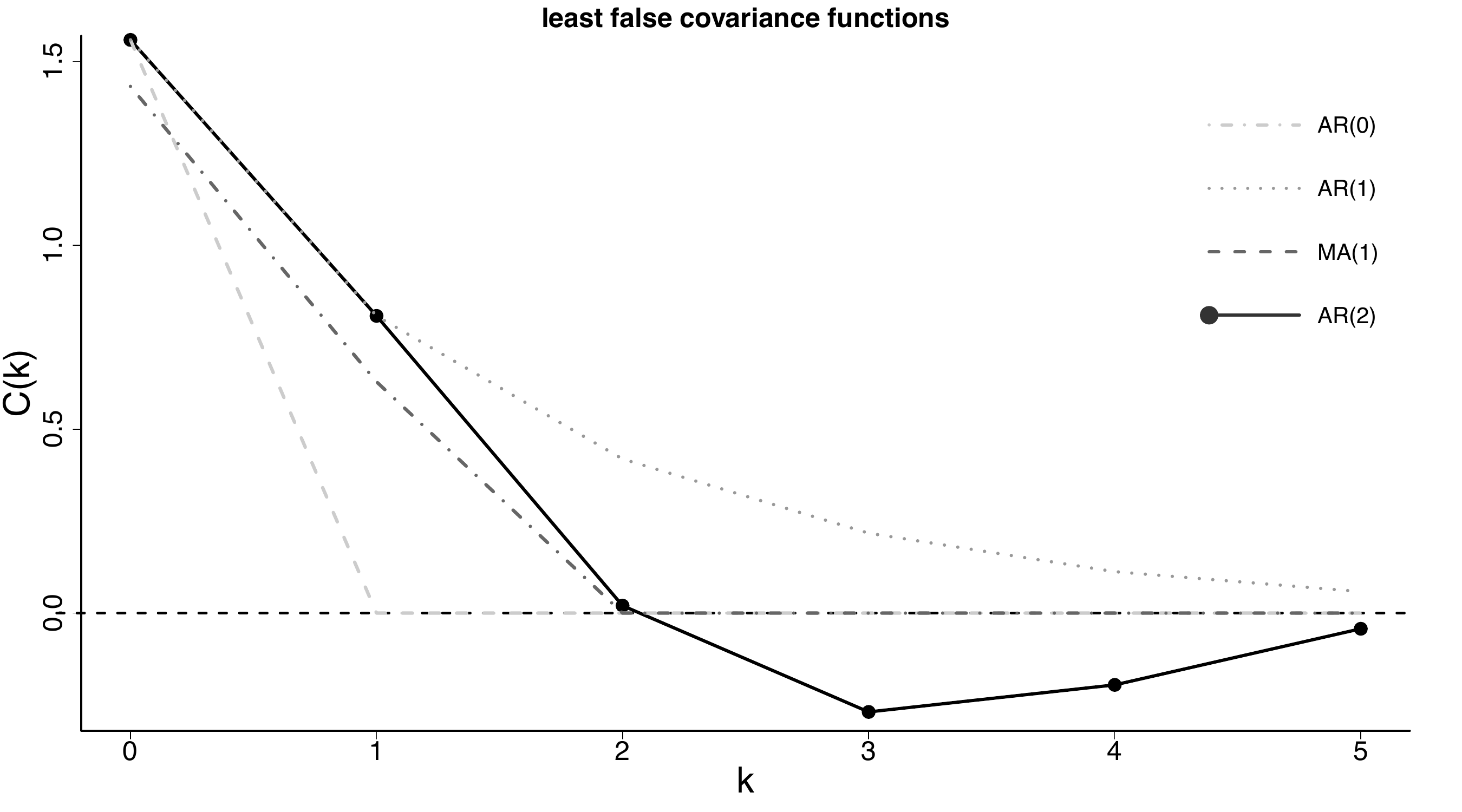}
\caption{
The five least false covariance functions under 
the assumption that the true model is an autoregressive model 
specified by the parameters $\sigma = 1.0$ and 
$\rho = (0.7, -0.6)$.}
\label{figure:least_false}
\end{figure}

Figure \ref{figure:limit_mse} shows the relative root-mse for 
estimating the focus parameter 
\begin{equation}
	\label{eq:mu_k}
	\mu_{k} = \mu(G; h_{k}) = \int_{-\pi}^{\pi} 
	\cos (\omega k) g(\omega) \, \dd \omega = C_{g}(k), \quad \textrm{ for } k = 1, \ldots, 5, 
\end{equation}
based on the following five candidates models: 
the independence model (autoregressive of order zero); 
the autoregressive of orders one and two; 
the moving average of order one; 
and finally the nonparametric one, where nothing
more is assumed than saying that the series is stationary with
a finite variance. 
The true model is an autoregressive 
model specified by the parameters 
$\rho = (0.7, -0.6)$ and $\sigma = 1.0$. This means that all but two, the autoregressive 
model of order two and the nonparametric model, are misspecified. 
The corresponding least false covariance estimates are plotted in 
Figure \ref{figure:least_false}. In the simulation study, we have used 
$B = 5000$ repetitions of length $n  =100$ to compute the 
actual relative root-mse values for each candidate.
Note that since we have included the true model among 
our candidates, nonparametric estimation is never the 
optimal choice; it is however often close and it is the second 
best choice for lags 1 and 3. For lags 2 and 5, where the true values 
are close to zero, the simpler models, like AR(0) and
MA(1), are highly successful, achieving reasonably  
low bias and also low variance. 

\begin{figure}[ht]
\centering
\includegraphics[width=0.90\textwidth]{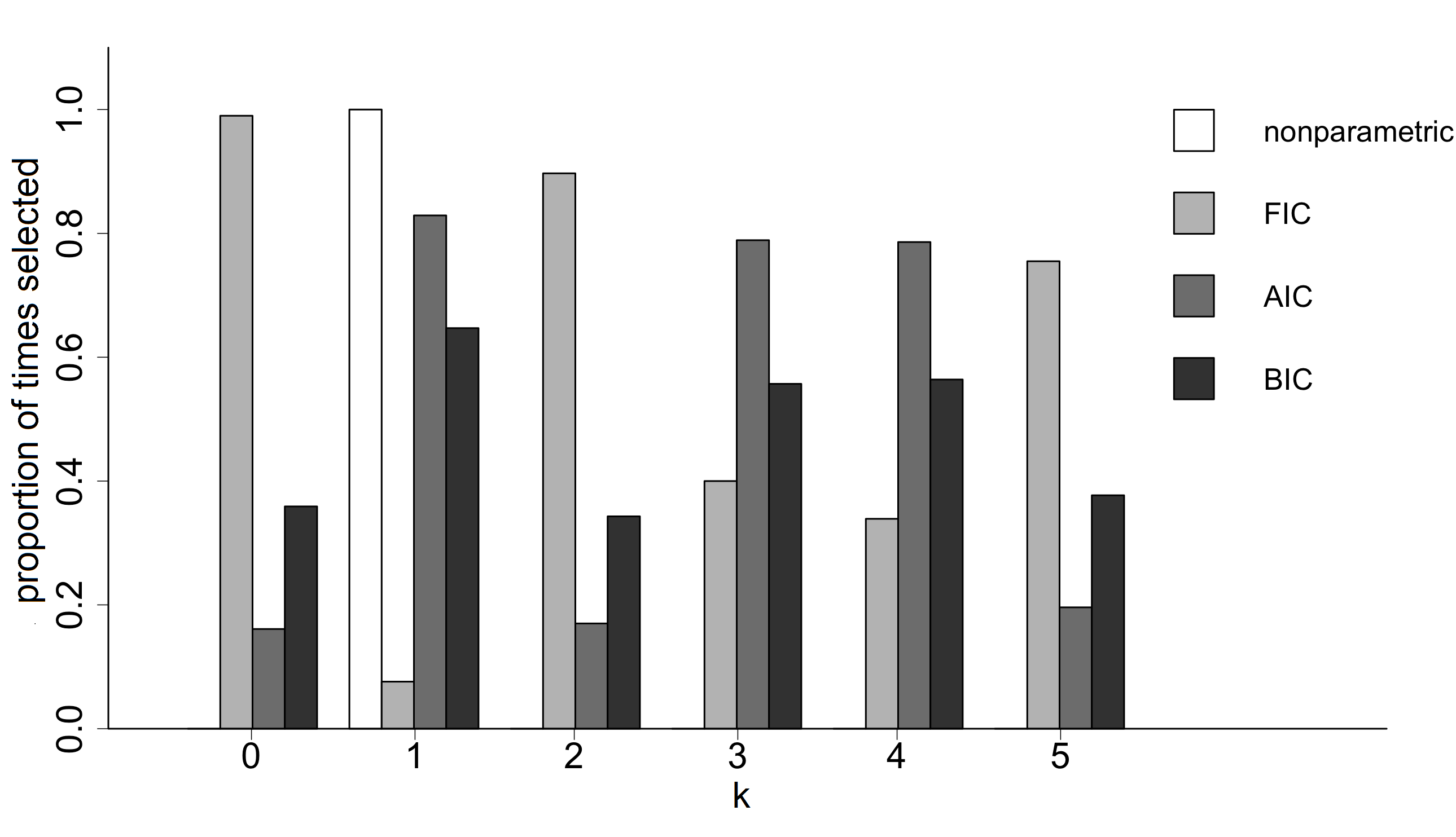}
\caption{
The proportion for which the different criteria selects the model with the theoretical lowest 
root-mean-squared error. The model-selectors are always nonparametric, FIC, AIC and BIC. 
The results are based on 5000 simulated series. 
}
\label{figure:simulation1}
\end{figure}

In Figure \ref{figure:simulation1} and  \ref{figure:simulation2} 
we further investigate the performance of the FIC. Here, we 
compare our FIC machinery
with three other model selection strategies, (i) to always 
use the nonparametric model, (ii) select the best parametric model 
according to the AIC and (iii) the parametric model selected by the BIC.
Note that the AIC and BIC tools do not work for the nonparametric 
model, since there is no likelihood function.
In Figure \ref{figure:simulation1} we have counted how many times 
each criterion selects the 
model that obtains the smallest root-mse value, for each focus parameter $\mu_{k}$ as defined in (\ref{eq:mu_k}).
Figure \ref{figure:simulation2} contains the corresponding attained root-mse values. 
Note that for lag 1 the theoretical root-mse for the autoregressive models
are, for all practical purposes, equal to that obtained by 
the nonparametric model. 
In all other cases, the nonparametric model 
has a root-mse larger than the optimal model.

In this illustration, the FIC behaves more or less 
as intended, by selecting (on average) the models that 
produces the smallest risk. The amount of evidence is by no
means conclusive, but it indicates that the FIC machinery 
has a real potential.

 \begin{figure}[ht]
\centering
\includegraphics[width=0.90\textwidth]{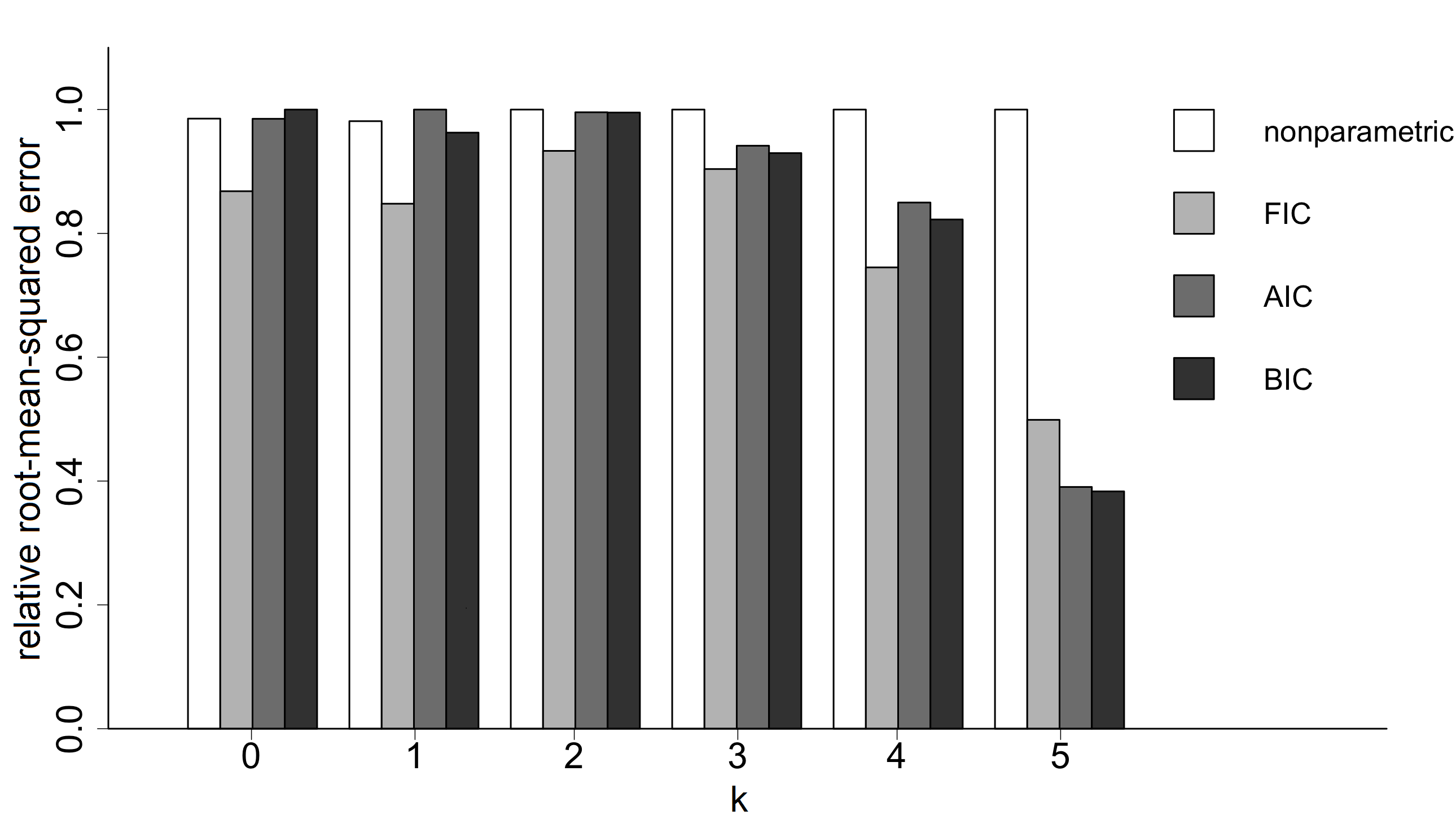}
\caption{
The relative root-mean-squared (computed in the same 
simulations) for the models selected by FIC, AIC and BIC, 
and by always using the nonparametric model.}
\label{figure:simulation2}
\end{figure}

\section{Concluding remarks}
\label{section:concludingremarks}

Here we offer a list of conclucing comments, some pointing
to further relevant research. 

\subsection{Model averaging}

The FIC scores may also be used to combine the most promising
estimators into a model averaged estimator, say 
$\hatt\mu^* = \sum_j c(M_j)\hatt\mu_j$,
with $c(M_j)$ given higher values for models $M_j$ with higher FIC scores; 
as discussed in \citet{HjortClaeskens03}.

\subsection{The conditional FIC}

For time series processes, several interesting and important 
focus parameters are naturally related to predictions, are sample size 
dependent or otherwise formulated conditional on past observations. The classical 
example is $k$-step ahead predictions. A class of such estimands could take the form 
$$\mu(\alpha,\gamma,y_{1}, \ldots, y_{m}) 
   = \Pr\{Y_{n + 1} > \alpha \textrm{ and } 
  Y_{n + 2} > \gamma\midd y_{1}, \ldots, y_{m} \} $$ 
for a suitable choice of $\alpha$.
The dependency on previous data requires a new and extended 
modelling framework, which in \citet[Sections 5 \& 6]{HermansenHjort15c} 
led to generalisations and also motivated a 
conditional focused information criterion (cFIC). 
In completing the FIC-framework for selecting among parametric 
and nonparametric time series models, such considerations
should also be taken into account.

\subsection{Linear time series processes}

Building on \cite{walker64,Hannan73,Brillinger75}, the main results of 
Section \ref{section:parametricornonparametric} can be 
extended to more general types of time series processes, 
like the generalised linear processes (cf.~\citet{Priestley81}); 
also without the assumption of Gaussian innovation terms.

\subsection{Trends and covariates}
In the presented work, our focus was on the dependency 
structure only. However, the methods and results of our paper 
may be generalised to select simultaneously among models 
with different trends and dependency structures, like 
$Y_t=m(x_{t} ,\beta) + \eps_t$, with $\eps_t$ a stationary 
Gaussian time process. These issues, leading to a larger 
repertoire of FIC formulae, will be returned to in later work. 
Since it is generally hard to estimate both the trend and dependency 
structure using a full nonparametric framework, the two main challenges
is to extend the existing work to handle the case with various 
parametric candidates for the trend $m(x_{t},\beta) $
and both parametric models and a nonparametric candidate for the 
dependency, i.e.~the spectral distribution (since we are working under 
the Gaussian assumption). Alternatively, we may assume that the 
$\eps_t$ belongs to an appropriate width family of parametric stationary 
time series processes, such as the autoregressive AR, the moving average 
MA or the mixture ARMA (cf.~\cite{Brockwell91}) and instead compare  
a nonparametric method for estimating the trend part of the model, 
perhaps extending this to functions of the type $m(t, x_{i}, \beta)$,
against a class of parametric alternatives. 

\subsection{The local large-sample framework}
\label{sec:local}

As mentioned in the introduction, \cite{HermansenHjort15c} derives
FIC for selecting among parametric time series models using a local 
asymptotics framework. 
The parametric candidate models then have spectral densities
belonging to a parametric family 
$f(\cdot;\theta,\gamma)$, with a $p$-dimensional protected 
$\theta$ and a $q$-dimensional open $\gamma$. This constitutes
a set of $2^q$ potential parametric candidate models.
The full (or wide) model is represented by the spectral
density $f(\cdot;\theta,\gamma)$. At the other 
end of the spectrum, the narrow model 
corresponds to fixating $\gamma = \gamma_{0}$, a known 
value, with the resulting $f(\cdot; \theta) 
= f(\cdot;\theta,\gamma_{0})$. 
The local misspecification framework then assumes that 
the true spectral density takes the form
$f(\cdot;\theta_0,\gamma_0+\delta/\rootn)$,
for some unknown $q$-dimensional $\delta$ 
describing the distance to the wide model.
This framework causes variances and squared biases to become
of the same order of magnitude $O(1/n)$. Those lead
to approximation formulae for the mean squared error and FIC formulae 
for nested parametric models, which are different from those obtained in this paper.

The introduction of the `asymptotically correct' nonparametric model of the present 
paper allowed us to derive FIC formulae even when sidestepping the above local misspecification assumption.
An alternative approach is to retain the local asymptotics framework
and work with spectral densities 
of the type 
$f_{r}(\omega) = f_{\theta_{0}}(\omega) + r(\omega) / \rootn$, where 
$f_{\theta_{0}}$ is a standard type of parametric 
model. 
Such structures have already been worked with in \cite{Dzhaparidze86}, making the extension potentially 
less cumbersome. 
This will not be dealt with here, however.


%



\bibliographystyle{biometrika}
\bibliography{hhj_bibliography3}

\end{document}